


\documentclass[journal,10pt]{IEEEtran}
\usepackage{color,algorithm,algorithmic,amsbsy,amsmath,amssymb,epsfig,bbm,mathrsfs,fancyhdr,fancyvrb,url}
\usepackage{graphicx}
\usepackage{amsmath}
\usepackage{ulem}
\usepackage{subcaption}
\usepackage{afterpage}  

\DeclareMathOperator{\erf}{erf}



\usepackage{xcolor,cite,etoolbox}


\newcommand{\beq}{\begin{equation}}
\newcommand{\eeq}{\end{equation}}
\newcommand{\beqn}{\begin{eqnarray}}
\newcommand{\eeqn}{\end{eqnarray}}

\usepackage{amsmath, amsthm, amssymb}
\usepackage[outdir=./]{epstopdf}
\usepackage{dsfont}
\usepackage{mathrsfs}

\newtheorem{Corollary}{Corollary}

\newtheorem{lemma}{Lemma}

\normalem

\begin{document}
\title{Exact Coverage Analysis of Intelligent Reflecting Surfaces with Nakagami-{\em m} Channels}
\author{

\thanks{.)}
\thanks{Manuscript received XXX, XX, 2020; revised XXX, XX, 2020.}}

\markboth{IEEE Transactions on Vehicular Technology,~Vol.~XX, No.~XX, XXX~2020}
{}

\author{Hazem Ibrahim, Hina Tabassum,  and Uyen T. Nguyen
        \vspace{-10mm}
\IEEEcompsocitemizethanks{\IEEEcompsocthanksitem Hazem Ibrahim, Hina Tabassum, and Uyen T. Nguyen are with the Department
of Electrical Engineering and Computer Science, York University, Toronto, Ontario, M3J 1P3 Canada (e-mail:{hibrahim,hina,utn}@cse.yorku.ca).
}
}
\maketitle
\begin{abstract}
Intelligent Reflecting Surfaces (IRS) are a promising
solution to enhance the coverage of future wireless networks by
tuning low-cost passive reflecting elements (referred to as \textit{metasurfaces}),
thereby constructing a favorable wireless propagation
environment. Different from prior works, which assume
Rayleigh fading channels and do not consider the direct link
between a base station and a user, this article develops a framework based on moment
generation functions (MGF) to characterize
the coverage probability of a user in an IRS-aided wireless
systems with generic Nakagami-m fading channels 
in the presence of direct links. In addition, we demonstrate
that the proposed framework is tractable for both finite and
asymptotically large values of the metasurfaces. Furthermore,
we derive the channel hardening factor as a function of the
shape factor of Nakagami-m fading channel and the number of IRS
elements. Finally, we derive a closed-form expression to calculate
the maximum coverage range of the IRS for given network
parameters. Numerical results obtained from Monte-Carlo simulations validate the derived analytical results.
\end{abstract}
\begin{IEEEkeywords}
Intelligent Reflecting Surface, Nakagami-m channels, 6G cellular networks, stochastic geometry.
\end{IEEEkeywords}
\IEEEdisplaynontitleabstractindextext
%
\IEEEpeerreviewmaketitle
\section{Introduction}\label{sec:introduction_report}
\IEEEPARstart{I}{ntelligent} reflecting surfaces (IRSs) are considered as among one of the key enabling technologies for 6G wireless networks. IRSs are fabricated surfaces of electromagnetic (EM) material that are electronically controlled with integrated electronics and have unique wireless communication capabilities. IRSs do not need any power supply, complex signal processing, or encoding and decoding processes, which are especially used to improve the signal quality at the receiver \cite{di2019smart,basar2019wireless}. IRSs  enable telecommunication operators to create a controllable channel propagation environment \cite{liaskos2018new}. 
Recent results have revealed that reconfigurable intelligent surfaces\footnote{IRSs are also referred to as reconfigurable intelligent surfaces.} can effectively control the wavefront, e.g., the phase, amplitude, frequency, and even polarization, of the impinging signals without complex signal processing operations.


Recently, a handful of research works have considered the coverage analysis of a user, assuming IRS-only transmissions with no direct transmission link  between a base station (BS) and a  user~\cite{yang2020coverage,zhang2020downlink,yue2020performance,basar2019wireless,Ref1,Ref2trigui2020comprehensive,Ref3ferreira2020bit,Ref4sharmay2020intelligent}. For instance, Ertugrul  et al. \cite{basar2019wireless} proposed a preliminary model to characterize the bit error rate of an IRS-assisted communication system  with large number of IRS elements and applied central limit theorem (CLT). Yang et al. \cite{yang2020coverage} characterized the coverage of an IRS-aided communication system with large number of IRS elements, applied CLT, and compared it with the relaying systems using Rayleigh fading channels.  Zhang et al.  \cite{zhang2020downlink} investigated the downlink performance of IRS-aided non-orthogonal multiple access (NOMA) networks via stochastic geometry. Yue and Liu \cite{yue2020performance} investigated the coverage probability in an IRS-assisted downlink NOMA networks.

{\color{black}Samuh and Salhab \cite{Ref1} approximated the outage probability, average symbol error probability (ASEP), and average channel capacity using the first term of Laguerre expansion.  Trigui et al. \cite{Ref2trigui2020comprehensive} derived the outage and ergodic capacity considering generalized  Fox’H fading channel. The expressions are in the form of multi-integral expressions and closed-forms are in the form of multi-variate Fox'H functions.  Considering Nakagami-m fading channels, Ferreira et al. \cite{Ref3ferreira2020bit} derived the exact bit error probability considering quadrature amplitude  and binary phase-shift keying (BPSK) modulations when the number of IRS elements, $n$, is equal to 2 and 3. The IRS-aided channel distributions for two and three IRS elements are given in the form of double and quadruple integrals, respectively. For large values of $n$,  central limit theorem (CLT) was applied. Sharma and Garg \cite{Ref4sharmay2020intelligent} investigated  the performance of IRSs with full duplex technology in terms of outage and error probabilities. The authors considered Nakagami-m fading channels and applied CLT.  }

None of the aforementioned research  considered direct links; therefore, it is difficult to identify scenarios in which  IRS-assisted transmissions outperform direct transmissions. Furthermore, most existing works assumed   asymptotically large numbers of IRS elements  and leverage on CLT \cite{lyu2020spatial,yue2020performance}. Recently,  Lyu and Zhang \cite{lyu2020spatial} 
 derived the spatial user throughput considering
Rayleigh fading, taking into account direct
links. However, to avoid tedious convolution of the probability density functions (PDFs) of the direct link and IRS-aided
link, they approximated the PDF of the cumulative channel gain using instead of with the gamma distribution. 

\textcolor{black}{Different from the aforementioned works, our contributions can be summarized as follows:
\begin{itemize}
    \item We provide a comprehensive moment generating function (MGF)-based framework to derive the exact coverage probability of a user in  an IRS-aided wireless system, assuming \textit{generic Nakagami-m} fading channels in the presence of \textit{direct link}.
    \item The proposed framework is analytically tractable for \textit{both finite and asymptotically large numbers of the IRS elements}, while allowing single integral coverage probability expressions. In contrast to CLT-based approximations or moment-based Gamma approximations, the proposed MGF-based approach can accurately reflect the behavior of IRS-assisted communication in all operating regimes. 
    \item  Furthermore, we derive the channel hardening factor as a function of the shape parameter of Nakagami-m fading channel and the number of IRS elements. The channel hardening factor reveals the conditions under which the  channel becomes nearly deterministic, resulting in improved reliability and  less frequent channel estimation. 
    \item We derive a \textit{closed-form expression to calculate the maximum coverage of the IRS} for the given network parameters.  
\item Numerical results obtained from Monte-Carlo simulations validate the analytical results and obtain useful insights related to the impact of a finite number of IRS elements, channel hardening in IRS-aided Nakagami-faded transmissions, the coverage range of the IRS, and the scenarios in which IRS-aided transmission gains can exceed direct transmission gains.
\end{itemize}    }




 \begin{figure}[t]
    \begin{center}
    \scalebox{0.45}[0.45]{\includegraphics{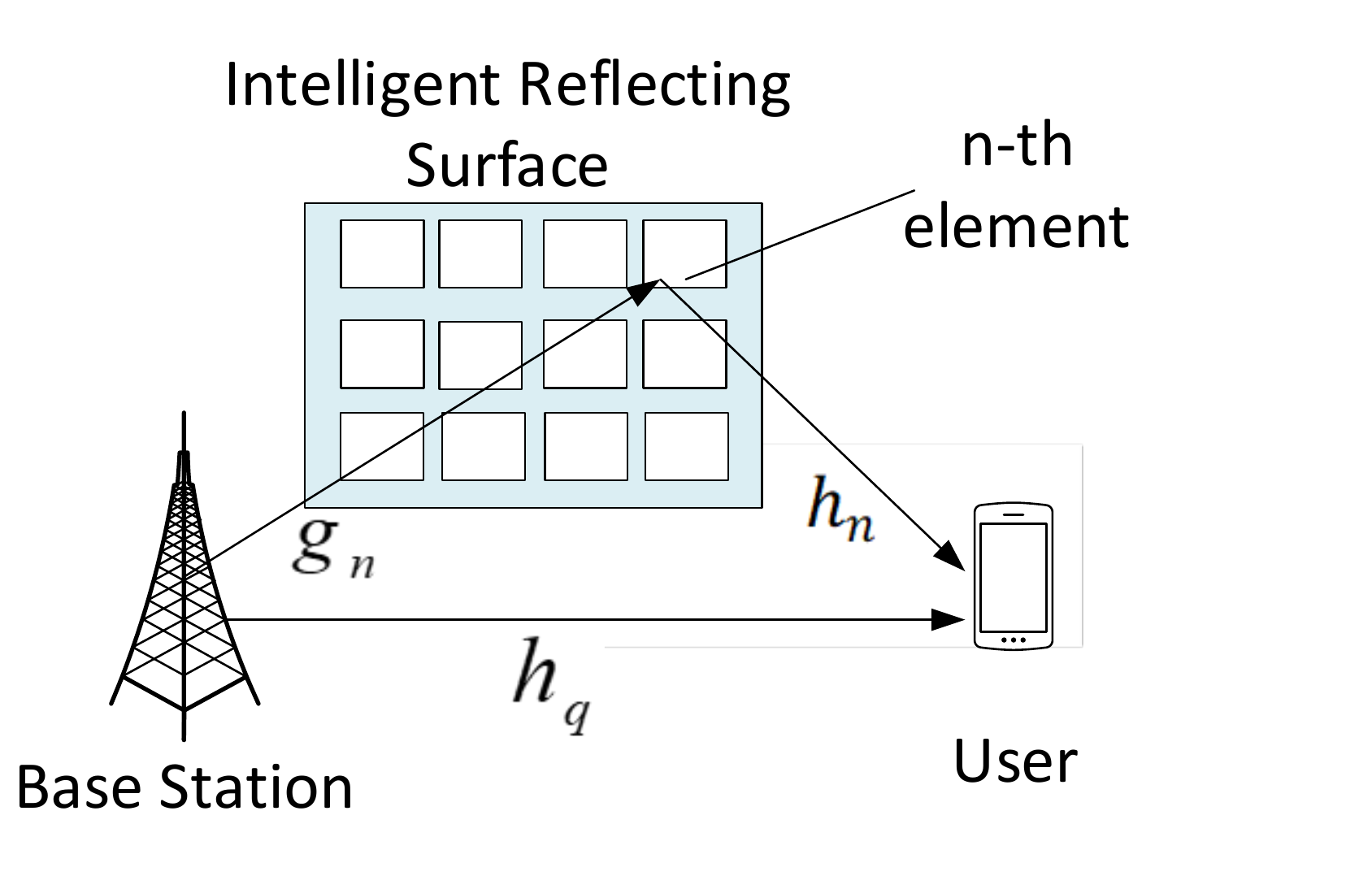}}
    \end{center}
    \caption{\small Intelligent reflecting surface-aided wireless system.}
     \label{IRS_arch}
    \end{figure}
\section{System Model and Assumptions}\label{model}

We consider a downlink IRS-aided communication system composed of a BS, a user, and an
IRS equipped with $N$ reflecting metasurfaces (as shown in Fig.~1). The BS and user are equipped with only one antenna. We assume that the IRS is placed between the BS and user  to improve the
user's signal quality in the presence of  direct transmissions. {\color{black}We assume that the IRS can obtain
the full channel state information (CSI)  to compute the phase shifts that maximize the signal-to-noise ratio (SNR)
at the receiver~\cite{wu2019towards,shafique2020optimization,basar2019wireless}.}

Let $\mathbf{g}\overset{\Delta}{=}[g_{1},..., g_{N}]^{T}\in \mathbb{C}^{N\times1}$, $\mathbf{h}\overset{\Delta}{=}[h_{1},..., h_{N}]^{T}\in \mathbb{C}^{N\times1}$, and $h_q\in \mathbb{C}$ denote the baseband equivalent channels  from the BS to the IRS, from the IRS to the user equipment (UE), and from the BS to the UE, respectively. Let $\mathbf{\Phi}\overset{\Delta}{=}\mathrm{diag}\{[e^{j\phi_{1}}, ..., e^{j\phi_{N}}]\}$ denote the phase-shifting matrix of the IRS, where $\phi_{n}\in [0,2\pi)$ is the phase shift by element $n$ on the incident signal, and $j$ denotes the imaginary unit. The \textit{cascaded BS-IRS-UE channel} is then modeled as a concatenation of three components, namely, the BS-IRS spatial stream link, IRS reflecting with phase shifts, and the IRS-UE link, and given by \cite{wu2019intelligent} as  $h_{c}\overset{\Delta}{=}\mathbf{g}^{T}\mathbf{\Phi}\mathbf{h}$. Note that the cascaded channel phase $\angle(g_{n}h_{n})$ for each IRS element $n =[1,... ,N]$ can be obtained via channel estimation. The IRS then adjusts the phase shift $\boldsymbol{\phi}=[\phi_{1},...,\phi_{N}]$ such that the $N$ reflected signals are of the same phase at its served UE's receiver by setting $\phi_{n} = -\angle(g_{n}h_{n})$, $n = 1, ... ,N$. Therefore, the overall cascaded channel gain is as follows:
\begin{align}
 |h_{c}|=|\mathbf{g}|^{T}|\mathbf{h}|=\sum_{n=1}^{N}|g_{n}||h_{n}|.   
\end{align}
We assume that the BS-UE channel phase $\angle h_q$ is also
known and the IRS can perform a common phase-shift such
that $h_c$ and $h_q$ are co-phased and hence coherently combined
at the UE [4], with the overall channel amplitude denoted by 
\begin{equation}
T=|h_c|+|h_q|.
\end{equation}
We denote $l$, $r$ and $d$ as the BS-UE, BS-IRS and UE-IRS distances, respectively. The channel amplitude from the BS to the $n$-th element of IRS and from the $n$-th element of the IRS to the UE can then be modeled, respectively, as follows:
\begin{align}
    |g_n|\overset{\Delta}{=}\zeta^{1/2}\varepsilon_{g} r_n^{-\alpha/2}, |h_n|&\overset{\Delta}{=}\zeta^{1/2}\varepsilon_{h} d_n^{-\alpha/2}.
\end{align}
 Similarly, the BS-UE channel amplitude is modeled as follows:
\begin{align}
    |h_q|\overset{\Delta}{=}\zeta \varepsilon_{q}l^{-\alpha/2},
\end{align}
where path-loss exponent $\alpha \geq 2$ and  $\zeta=(\frac{\text{carrier wavelength}}{4\pi})^{2}$ is  the near-field path loss factor at a reference distance of one meter (1 m) as a function of the carrier frequency.  The distribution of the Nakagami-m fading channel  gain $\varepsilon_{x}$ is thus given as follows:
\begin{equation}
   f_{\varepsilon_{x}}(x)=\frac{2m^{m}x^{2m-1}}{\Omega^{m}\Gamma(m)}\exp\left(\frac{-mx^{2}}{\Omega}\right), x>0, 
\end{equation}
 where $x\in\{g,h,q\}$ denotes the BS-IRS, IRS-UE and BS-UE links, respectively, and  $\Gamma(.)$ is the gamma function. The fading severity parameter is $m\in [1,2,...,\infty)$ and the mean fading power is denoted by $\Omega$. Note that 
$m\geq 0.5$ is the shape (or fading figure) parameter.   Rayleigh fading is a special case of Nakagami-{\em m} when  $m=1$. 

Let $P$ denote the BS downlink transmit power for each UE. If the typical UE  is served by the BS only, then its received SNR can be modeled as follows:
\begin{align}
    \mathrm{SNR}_{B}\overset{\Delta}{=} P|h_q|^{2}\hat{\sigma}^{-2}=P\zeta\varepsilon_{q}l^{-\alpha}\hat{\sigma}^{-2},
\end{align}
where $l=\sqrt{r^2+d^2-2rd\cos{({\color{black}\psi})}}$ and ${\color{black}\psi}$ is the angle between the BS-IRS and IRS-UE link, and $\hat{\sigma}^{2}$ is the variance of the additive white Gaussian noise at the UE receiver. On the other hand, if the UE is served by both the IRS and BS, then its received SNR can be modeled as follows:
\begin{align}
    \mathrm{SNR}_{S}&\overset{\Delta}{=} P(|h_c|+|h_q|)^{2}\hat{\sigma}^{-2}= P\left(\sum_{n=1}^{N}|g_{n}||h_{n}|+|h_q|\right)^{2}\hat{\sigma}^{-2}.
    \nonumber
\end{align}

\section{Exact Coverage Probability Analysis}

 The coverage probability is defined by the complementary cumulative distribution function (CCDF) of the SNR (i.e., $\mathbb{P}[{\rm SNR}>\theta]$, where $\theta$ denotes the predefined threshold for correct signal reception) and is characterized as follows:
\begin{align}
    \mathcal{C}_{S}&=\mathbb{P}[{\rm SNR_{S}}>\theta]
    =\mathbb{P}[P(|h_c|+|h_q|)^{2}\hat{\sigma}^{-2}>\theta]\notag,\\
    &={\color{black}\mathbb{P}\Big[T>\left(\frac{\theta \hat{\sigma}^{2}}{P}\right)^{1/2}\Big]=1-F_{T}\left(\left(\frac{\theta \hat{\sigma}^{2}}{P}\right)^{1/2}\right)},
\end{align}
{\color{black}where $F_{T}(.)$ is the  CDF of  $T$}. Following is our methodology: \textbf{(i)} deriving the Laplace transform (or moment generating function (MGF)) of $h_c$, $h_q$, and $T$ considering  asymptotically large values of reflecting elements $N$ in \textbf{Section~III.A} and finite values of $N$ in \textbf{Section~III.B}, \textbf{(ii)} deriving the characteristic function (CF) of $T$, and \textbf{(iii)} applying Gil-Pelaez inversion to the CF of $T$.
\subsection{Asymptotically Large Values of Reflecting Elements $N$}
The channel amplitude of the BS-IRS-UE signal that goes through element $n$ follows scaled double Nakagami-m, i.e.,
\begin{align}
    &|h_{c}|
    \stackrel{(a)}{\approx}\zeta r^{-\alpha/2}d^{-\alpha/2}\sum_{n=1}^{N} Y_n= \rho(r,d)\sum_{n=1}^{N} \varepsilon_g \varepsilon_h, \nonumber
\end{align}
{where (a) follows from the limited size of the UE and the IRS, i.e., $d_n\approx d$ and $r_n\approx r$.} Note that $Y_n$ denotes a double Nakagami-m distributed random variable with \textit{independent but not necessarily identically distributed (i.n.i.d)} $\varepsilon_h$ and $\varepsilon_g$ variables and $\rho(r,d)=\zeta r^{-\alpha/2}d^{-\alpha/2}$. 

For large values of $N$, the PDF of $|h_c|$ can be derived as:
\begin{align}\label{h_c}
    &{h_c}\stackrel{(a)}{\approx} \mathcal{N}\left(\rho(r,d)N \mathbb{E}[Y_n],\rho(r,d)^{2}N\mathrm{var}\{Y_n\}\right), 
\end{align}
where $\mathcal{N}(\mu,\sigma^2)$  denotes the Gaussian random variable with mean $\mu$ and variance $\sigma^2$. Note that step (a) follows from applying the central limit theorem (CLT) which dictates that $\sum_{n=1}^{N} Y_n$ approaches normal distribution as $N \rightarrow \infty$ and  scaling the normal distribution with
$\rho(r,d)$.  The distribution of a normal random variable $X$ scaled with a constant $c$  is given by $cX\sim
\mathcal{N}(c\mu,c^2\sigma^2)$, where $\mu$ is the mean and $\sigma^2$ is the variance. The $b$-th order moment of a double Nakagami-m RV $Y_n$ can be given as \cite{karagiannidis2005n}: 
\begin{align}
    \mathbb{E}[Y_n^{b}]=\prod_{i=1}^{2}\frac{\Gamma(m_{i}+b/2)}{\Gamma(m_{i})}\left(\frac{\Omega_{i}}{m_{i}}\right)^{b/2}.
\end{align}
Therefore, the mean and variance  of  $Y_n$ are as follows:
\begin{align}\notag
  &\mu_Y=\mathbb{E}[Y_n]\overset{\Delta}{=} \prod_{i=1}^{2}\frac{\Gamma(m_{i}+1/2)}{\Gamma(m_{i})}\left(\frac{\Omega_{i}}{m_{i}}\right)^{1/2},\\
   &\sigma_Y^2=\mathrm{var}\{Y_n\}\overset{\Delta}{=}\mathbb{E}[Y_n^{2}]-\mathbb{E}[Y_n]^{2}=\notag\\
   &=\prod_{i=1}^{2}\frac{\Gamma(m_{i}+1)}{\Gamma(m_{i})}\left(\frac{\Omega_{i}}{m_{i}}\right)-\left(\prod_{i=1}^{2}\frac{\Gamma(m_{i}+1/2)}{\Gamma(m_{i})}\left(\frac{\Omega_{i}}{m_{i}}\right)^{1/2}\right)^{2}.\notag
\end{align}
Now given the distribution of $|h_c|$, the coverage probability can be derived as in the following lemma.
\begin{lemma}
Applying the Gil-Pelaez inversion\cite{gil1951note}, the coverage probability can be obtained as
$
\mathcal{C}_S=1- F_{T}(t),
$
where
\begin{align}\label{Gil-pelaez-2hops}
F_{T}(t)=\frac{1}{2}+\frac{1}{\pi}\int_{0}^{\infty}\frac{\Im\left(e^{j\omega t}\varphi_{T}(\omega)\right)}{\omega}\text{d}\omega,
\end{align}
where $\Im(w)$ is imaginary part of $w\in \mathbb{C}$ and  $j\overset{\Delta}{=}\sqrt{-1}$. The characteristic function of $T$ can be given as follows:
\begin{equation}
    \varphi_{T}(\omega)=e^{-\mu_Y j \omega-\frac{1}{2}\sigma_Y^{2}\omega^{2}}\frac{\Gamma(m+0.5)}{\sqrt{\pi}}U\left(m,0.5,\frac{-(c_1 \omega)^{2}\Omega}{4m}\right).
    \notag
\end{equation}
\end{lemma}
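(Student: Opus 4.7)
The plan is to decompose the lemma into three pieces. The identity $\mathcal{C}_S = 1 - F_T(t)$ with $t = (\theta\hat{\sigma}^{2}/P)^{1/2}$ is already established by the coverage-probability display preceding the lemma, so I would simply invoke it. The integral representation \eqref{Gil-pelaez-2hops} then follows by direct application of the classical Gil-Pelaez inversion theorem to $T$, whose characteristic function $\varphi_T$ exists and is continuous because $T\geq 0$; the only bookkeeping is to use $\Im(e^{-j\omega t}\varphi_T(\omega)) = -\Im(e^{j\omega t}\varphi_T(-\omega))$ to reconcile the sign convention with the form displayed in the lemma.

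The substantive work is the closed-form expression for $\varphi_T(\omega)$. Because the direct-link amplitude $|h_q|$ is independent of the cascaded IRS amplitude $|h_c|$, the characteristic function factorises as $\varphi_T(\omega) = \varphi_{|h_c|}(\omega)\,\varphi_{|h_q|}(\omega)$. The first factor is inherited directly from the CLT Gaussian approximation $|h_c|\sim\mathcal{N}(\mu,\sigma^2)$ derived in \eqref{h_c}; it contributes the exponential $\exp(j\omega\mu - \tfrac{1}{2}\sigma^{2}\omega^{2})$, with $\mu$ and $\sigma^{2}$ absorbing the $\rho(r,d)N$ scalings of the mean and variance of $Y_n$ computed immediately above the lemma.

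For the second factor, I would write $|h_q| = c_1 \varepsilon_q$ with $c_1 = \zeta l^{-\alpha/2}$ and $\varepsilon_q$ Nakagami-$m$ with shape $m$ and mean power $\Omega$, and then compute $\varphi_{|h_q|}(\omega) = \int_0^\infty e^{j c_1\omega x}\, f_{\varepsilon_q}(x)\,dx$. Splitting the phase into cosine and sine components permits direct application of the tabulated integrals $\int_0^\infty x^{2m-1}e^{-ax^{2}}\cos(bx)\,dx$ and $\int_0^\infty x^{2m-1}e^{-ax^{2}}\sin(bx)\,dx$, which return Kummer confluent hypergeometric functions ${}_1F_1(m;0.5;\,\cdot\,)$ and ${}_1F_1(m+0.5;1.5;\,\cdot\,)$ respectively, both with argument $-(c_1\omega)^{2}\Omega/(4m)$. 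Invoking the standard linear-combination identity that writes the Tricomi $U(m,0.5,z)$ as a weighted sum of these two Kummer pieces and collecting the Gamma prefactors produces the stated $\frac{\Gamma(m+0.5)}{\sqrt{\pi}}\,U\!\left(m,0.5,-(c_1\omega)^{2}\Omega/(4m)\right)$ form. The principal obstacle is this final recombination: because $z<0$, the $z^{1/2}$ factor appearing in the Tricomi identity must be interpreted with a consistent branch (namely $z^{1/2}=j\sqrt{|z|}$) so that the imaginary contribution from the sine integral absorbs exactly into the $z^{1/2}$ term and the real and imaginary pieces of $\varphi_{|h_q|}$ are reassembled correctly; once the sign conventions are reconciled, the remaining algebra is routine hypergeometric-table bookkeeping.
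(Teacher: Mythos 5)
Your proposal follows essentially the same architecture as the paper's proof: invoke the coverage-probability identity, factor $\varphi_T$ (equivalently $M_T$) into the $|h_c|$ and $|h_q|$ pieces by independence, take the Gaussian factor from the CLT approximation in \eqref{h_c}, and handle the direct link through the scaling $|h_q|=c_1\varepsilon_q$. The one place you diverge is the Nakagami factor: the paper computes the real Laplace transform $M_Z(s)=\int_0^\infty e^{-sZ}f_{\varepsilon_q}(Z)\,dZ$ in a single tabulated step, obtaining $\frac{\Gamma(m+0.5)}{\sqrt{\pi}}U\!\left(m,0.5,\frac{s^2\Omega}{4m}\right)$ directly, and then passes to the characteristic function by substituting an imaginary argument (which is what produces the negative argument $-(c_1\omega)^2\Omega/(4m)$ in the lemma), whereas you compute the characteristic function from scratch by splitting $e^{jc_1\omega x}$ into cosine and sine integrals and recombining the two Kummer ${}_1F_1$ pieces into the Tricomi $U$. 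Both routes are valid and land on the same expression; yours is more laborious but has the merit of making explicit the branch choice in $z^{1/2}$ for $z<0$, a point the paper's substitution $s\to j c_1\omega$ silently glosses over. No gap in either argument, though you could shorten yours considerably by adopting the paper's order of operations (closed-form Laplace transform first, analytic continuation second).
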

\begin{proof}
Since $h_q$ is a scaled Nakagami-m distributed random variable, we first characterize the
 MGF of Nakagami-m variable $M_{Z}(s)$ as follows:
\begin{align}\label{mzs}
    M_{Z}(s)&=\int_{0}^{\infty}\frac{2m^{m}{Z}^{2m-1}}{\Omega^{m}\Gamma(m)}\exp\left(\frac{-m{Z}^{2}}{\Omega}\right)\exp(-sZ)dZ,\notag\\
    &=\frac{1}{\sqrt{\pi}}\Gamma(m+0.5)U\left(m,0.5,\frac{s^{2}\Omega}{4m}\right),
\end{align}
where $U(a,b,z)$ is the confluent hypergeometric function. On the other hand,  since we characterize  $h_c$ as normally distributed after applying CLT in Eq. (\ref{h_c}), the MGF of normally distributed variable $M_{h_c}(s)$ can be given as follows:
\begin{align}
    M_{h_c}(s)&=\exp\left(-\mu_Y s+\frac{1}{2}\sigma_Y^{2}s^{2}\right).
\end{align}
Since $|h_c|$ and $|h_q|$ are independent, the MGF of $T=|h_c|+|h_q|$ can be obtained by the multiplication of the MGFs of the random variables $h_c$ and $h_q$ as follows:
\begin{align}
&M_{T}(s){=}M_{h_{c}}(s) M_{h_q}(s)\stackrel{(a)}{=}M_{h_{c}}(s) M_{Z}(c_1 s),
\end{align}
\noindent where step (a) follows from the scaling property of MGF $M_{h_{q}}(c_1s)$ which is scaled with constant $c_1$, where $|h_q|\overset{\Delta}{=}\zeta^{1/2}l^{-\alpha/2}\varepsilon_{q}^{1/2}=c_1\varepsilon_{q}^{1/2}$ and $c_1\overset{\Delta}{=}\zeta^{1/2}l^{-\alpha/2}$.
\end{proof}
For Rayleigh fading channels, the MGF of $T$ can be  simplified as in the following corollary.
\begin{Corollary}
For $m=1$, the direct channel $h_q$ becomes scaled Rayleigh distributed random variable. Thus, using the identity $U(1,0.5,z)=e^{z}z^{0.5}\Gamma(-0.5,z)$, $M_{Z}(s)$ in \textbf{Lemma~1} can be simplified as follows:
\begin{equation}
    M_{Z}(s)
    =1-\frac{s\sqrt{\pi \Omega}}{2} e^{\frac{s^{2}\Omega}{4}} \mathrm{erfc}\left(\frac{s\sqrt{\Omega}}{2}\right),
\end{equation}
where {\rm erfc($\cdot$)} is the complementary error function.
\end{Corollary}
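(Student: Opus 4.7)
The plan is a direct specialization of the MGF formula from Lemma~1 followed by algebraic manipulation of special functions. Starting from
\begin{equation}
M_Z(s) = \frac{1}{\sqrt{\pi}}\,\Gamma(m+0.5)\,U\!\left(m,0.5,\frac{s^2\Omega}{4m}\right),
\notag
\end{equation}
I would substitute $m=1$, use $\Gamma(1.5)=\sqrt{\pi}/2$ so the prefactor collapses to $1/2$, and set $z \triangleq s^2\Omega/4$ so that $z^{1/2} = s\sqrt{\Omega}/2$. At this point the problem reduces to simplifying $\tfrac{1}{2}\,U(1,0.5,z)$, and the hinted identity $U(1,0.5,z)=e^{z}z^{0.5}\Gamma(-0.5,z)$ lets me write
\begin{equation}
M_Z(s) = \tfrac{1}{2}\,e^{z}\,z^{1/2}\,\Gamma(-0.5,z).
\notag
\end{equation}

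The second step is to express $\Gamma(-0.5,z)$ in terms of the complementary error function. I would invoke the standard upper-incomplete-gamma recurrence $\Gamma(a+1,z)=a\,\Gamma(a,z)+z^{a}e^{-z}$ with $a=-1/2$, which together with the well-known identity $\Gamma(1/2,z)=\sqrt{\pi}\,\mathrm{erfc}(\sqrt{z})$ yields
\begin{equation}
\Gamma(-0.5,z) = 2\,z^{-1/2}e^{-z} - 2\sqrt{\pi}\,\mathrm{erfc}(\sqrt{z}).
\notag
\end{equation}

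Plugging this into the expression for $M_Z(s)$ and observing that $\sqrt{z}=s\sqrt{\Omega}/2$, the first term contributes $\tfrac{1}{2}\cdot e^{z}\cdot z^{1/2}\cdot 2z^{-1/2}e^{-z}=1$ and the second term contributes $-\tfrac{s\sqrt{\pi\Omega}}{2}\,e^{s^2\Omega/4}\,\mathrm{erfc}(s\sqrt{\Omega}/2)$, giving exactly the claimed closed form. The only nontrivial step is recognizing the reduction $\Gamma(-0.5,z)\to\mathrm{erfc}$; everything else is bookkeeping. Because all manipulations are term-by-term identities valid for $\mathrm{Re}(s)\ge 0$, no additional convergence argument is needed beyond the domain on which the MGF in Lemma~1 is already defined.
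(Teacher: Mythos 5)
Your proposal is correct and follows exactly the route the paper intends: substitute $m=1$ so the prefactor $\Gamma(1.5)/\sqrt{\pi}$ collapses to $1/2$, apply the stated identity $U(1,0.5,z)=e^{z}z^{0.5}\Gamma(-0.5,z)$, and reduce $\Gamma(-0.5,z)$ to $\mathrm{erfc}$ via the recurrence $\Gamma(a+1,z)=a\Gamma(a,z)+z^{a}e^{-z}$ together with $\Gamma(1/2,z)=\sqrt{\pi}\,\mathrm{erfc}(\sqrt{z})$. The paper leaves this last reduction implicit, so your write-up simply supplies the bookkeeping the authors omitted; there is no gap.
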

\subsection{Finite Values of Reflecting Elements $N$}
When $N$ is not a sufficiently large value, i.e., a finite value, the coverage probability
$\mathcal{C}_S=1- F_{T}(t)$ can be characterized via the following lemma.
\begin{lemma}
For finite values of $N$, we characterize the MGF of $ |h_{c}|=\zeta r^{-\alpha/2}d^{-\alpha/2}\sum_{n=1}^{N} Y_n$ considering that $Y_n$ is independent for all  $n$, i.e.,
\begin{equation}
    M_{h_c}(c_2 s)=\prod_{n=1}^{N} M_{Y_n}(c_2 s),
\end{equation}
where $c_2=\zeta r^{-\alpha/2}d^{-\alpha/2}$ is the scaling factor.  Since $Y_n$ is a double Nakagami-m RV, its MGF can be given as  follows \cite{karagiannidis2005n}:
\begin{align}\label{n_nakagami}
    M_{Y_n}(s)=\frac{1/\sqrt{\pi}}{\prod_{i=1}^{2}\Gamma(m_{i})}G_{2,2}^{2,2}\Bigg[\frac{4}{s^2}\prod_{i=1}^{2}\frac{m_i}{\Omega_{i}}\Bigg|\begin{matrix}
1/2 & 1  \\
m_{1} & m_2
\end{matrix}
\Bigg],
\end{align}
where $G[\cdot]$ is the Meijer's G-function \cite{jeffrey2007table}. 
\end{lemma}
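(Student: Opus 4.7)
The plan is to prove Lemma~2 in two stages: first establish the factorization $M_{h_c}(c_2 s) = \prod_{n=1}^N M_{Y_n}(c_2 s)$ from first principles using independence of the per-element cascaded channels, and then derive the closed-form Meijer's G expression for $M_{Y_n}(s)$ starting from the product-of-two-Nakagami PDF.

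For the first stage, I would start from the definition of the MGF, $M_{h_c}(s) = \mathbb{E}[\exp(-s|h_c|)]$, and substitute the representation $|h_c| = c_2 \sum_{n=1}^N Y_n$ with $c_2 = \zeta r^{-\alpha/2} d^{-\alpha/2}$ obtained in Section III.A under the $d_n \approx d$, $r_n \approx r$ approximation. Evaluating the MGF at the scaled argument $c_2 s$ is equivalent to evaluating $\mathbb{E}[\exp(-c_2 s \sum_n Y_n)]$. Because the fading variables $\{\varepsilon_{g,n}, \varepsilon_{h,n}\}$ are independent across $n$, the random variables $Y_n = \varepsilon_{g,n} \varepsilon_{h,n}$ inherit mutual independence, so the expectation of the exponential of their sum factors into a product of per-element expectations, yielding the claimed $\prod_{n=1}^N M_{Y_n}(c_2 s)$. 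This step is essentially a direct application of the scaling and independence properties of the MGF.

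For the second stage, I would derive (or invoke) the double Nakagami-m MGF. The PDF of $Y_n = \varepsilon_g \varepsilon_h$ is obtained by the standard product-density convolution $f_{Y_n}(y) = \int_0^\infty f_{\varepsilon_g}(y/u) f_{\varepsilon_h}(u) \, u^{-1} \, du$, using the Nakagami PDF stated in Eq.~(5). The resulting integral evaluates to a modified Bessel function $K_{m_1 - m_2}(\cdot)$ form. Computing $M_{Y_n}(s) = \int_0^\infty f_{Y_n}(y) e^{-sy} \, dy$ then requires the Laplace transform of a Bessel-weighted power; the cleanest route is to replace the Gaussian exponential and the Bessel factor by their Mellin--Barnes contour representations, exchange the order of integration, and recognize the remaining contour integral as the Meijer's G-function $G_{2,2}^{2,2}[\cdot]$ with the parameter sequence $(1/2, 1 \,|\, m_1, m_2)$. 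The argument $(4/s^2) \prod_{i=1}^{2}(m_i/\Omega_i)$ and the normalization $1/(\sqrt{\pi}\,\prod_i \Gamma(m_i))$ emerge naturally from the Mellin kernel and the normalization constants of the two Nakagami PDFs. Since this identity is already tabulated in \cite{karagiannidis2005n}, the lemma statement follows by direct substitution.

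The main obstacle is the second stage: correctly collapsing the double Mellin--Barnes integral into a single Meijer's G-function with the right indices and argument, especially managing the $s$-scaling and the factor of $4$ in the argument. The first stage is essentially routine once independence is argued, and the only subtle point there is being explicit that the lemma characterizes $M_{h_c}$ evaluated at the scaled point $c_2 s$ (not $s$), which is the natural quantity appearing when $|h_c|$ is multiplied into the overall MGF $M_T(\cdot)$ of Lemma~1.
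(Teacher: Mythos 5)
Your proposal is correct and follows essentially the same route as the paper: the paper offers no explicit proof of Lemma~2, relying exactly on the independence/scaling properties of the MGF for the factorization and on the cited reference \cite{karagiannidis2005n} for the double Nakagami-$m$ MGF in Meijer's $G$ form. Your second stage simply unpacks the derivation behind that citation (product-density convolution to a Bessel-$K$ PDF, then Mellin--Barnes contours), which is consistent with, though more detailed than, what the paper records.
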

 For independent and identically distributed (i.i.d.) $\varepsilon_g$ and $\varepsilon_h$, we simplify the MGF of $|h_{c}|$ as follows.
\begin{Corollary}
For i.i.d. $\varepsilon$, $Y_n$ follows a Gamma distribution, the MGF of $h_c$ can thus be given as  follows \cite{karagiannidis2006closed}:
\begin{align}\label{n_nakagami}
    M_{h_c}(c_2 s)=\left(1+\frac{c_2 s \Omega}{m}\right)^{-N m}.
\end{align}
\end{Corollary}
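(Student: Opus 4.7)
The plan is to specialize the general MGF expression from Lemma~2 to the equal-parameter case and to exploit the closure of the Gamma family under both convolution of i.i.d.\ variables and multiplicative scaling. I would proceed in three short steps: (i) factor $M_{h_c}(c_2 s)$ into $N$ identical single-term MGFs $M_{Y}(c_2 s)$ using independence and the common distribution of $Y_n=\varepsilon_g\varepsilon_h$; (ii) identify the distribution of $Y$ with a Gamma whose parameters are determined by $(m,\Omega)$; and (iii) raise $M_{Y}(c_2 s)$ to the $N$-th power and simplify.

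For step (ii), I would compute the first two moments of $Y$ from the double-Nakagami moment formula stated just before Lemma~1 by setting $m_1=m_2=m$ and $\Omega_1=\Omega_2=\Omega$, and then equate these to the Gamma$(k,\theta)$ moments $\mathbb{E}[Y]=k\theta$ and $\mathrm{Var}[Y]=k\theta^{2}$. The resulting pairing $k=m$, $\theta=\Omega/m$, together with the standard Gamma MGF $M_Y(s)=(1+s\theta)^{-k}$, directly yields $M_Y(c_2 s)=\left(1+c_2 s\,\Omega/m\right)^{-m}$. Step (iii) then produces $M_{h_c}(c_2 s)=\left[M_Y(c_2 s)\right]^{N}=\left(1+c_2 s\,\Omega/m\right)^{-Nm}$, which is the claimed closed form and is recognizable as the MGF of a single $\mathrm{Gamma}(Nm,\Omega/m)$ random variable subsequently scaled by $c_2$.

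The chief obstacle is the rigorous justification of the Gamma form for $Y_n$. Strictly speaking, the product of two i.i.d.\ Nakagami-m amplitudes is double-Nakagami (a $K$-type amplitude), not Gamma, so identifying it with a Gamma variable is best viewed either as (a) a moment-matching approximation in the spirit of the Gamma approximations referenced in the introduction, or (b) an appeal to a specific degenerate regime in which the double-Nakagami MGF in Eq.~(15) collapses, via a Meijer $G$ reduction, to the rational form $(1+c_2 s\,\Omega/m)^{-m}$ when the two shape parameters coincide. I would therefore devote most of the proof effort to making this reduction explicit, and, should it remain only approximate, to quantifying the resulting error in the MGF over the range of $s$ relevant to the Gil-Pelaez inversion of Lemma~1 so that the downstream coverage probability expression remains faithful.
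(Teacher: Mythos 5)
The paper supplies no derivation for this corollary; it simply cites the closed-form result of \cite{karagiannidis2006closed}, so there is no internal proof to compare against, and your overall strategy---factor $M_{h_c}(c_2 s)$ into $N$ identical single-term MGFs by independence and then exponentiate---is the only sensible route and surely what the authors intend. The genuine gap is in your step (ii). Matching the first two moments of $Y=\varepsilon_g\varepsilon_h$ (with $m_1=m_2=m$, $\Omega_1=\Omega_2=\Omega$) to a $\mathrm{Gamma}(k,\theta)$ does \emph{not} produce $k=m$, $\theta=\Omega/m$: the moment formula preceding Lemma~1 gives $\mathbb{E}[Y]=[\Gamma(m+1/2)/\Gamma(m)]^2\,\Omega/m$ and $\mathbb{E}[Y^2]=\Omega^2$, so for $m=\Omega=1$ the matched parameters are $k=\mu_Y^2/\sigma_Y^2\approx 1.61$ and $\theta\approx 0.49$, whereas $(k,\theta)=(1,1)$ would force $\mathbb{E}[Y]=\Omega$, contradicting $\mathbb{E}[Y]=\pi\Omega/4$. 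The pair $(k,\theta)=(m,\Omega/m)$ is instead the exact Gamma law of a \emph{single squared} Nakagami amplitude $\varepsilon^2$---which is precisely the setting of the cited reference (sums of squared Nakagami variates)---not of the product of two independent amplitudes. Your fallback (b) also does not work: the $G^{2,2}_{2,2}$ MGF of Lemma~2 with coincident shape parameters does not collapse to the rational form $(1+c_2 s\Omega/m)^{-m}$; the double-Nakagami (Bessel-$K$ type) law remains non-Gamma even when the two factors are i.i.d.

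You correctly sense the difficulty in your closing paragraph, but the proposal stops at flagging it rather than resolving it, and the concrete parameter identification you offer would fail if carried out. As written, the corollary's formula is obtainable only (a) as an exact statement under the different hypothesis $Y_n=\varepsilon^2$, i.e., treating the cascaded gain as a sum of squared Nakagami variates, or (b) as a heuristic Gamma substitution whose parameters are not in fact moment-matched to the true product distribution; either reading should be made explicit, since the discrepancy propagates directly into the Gil-Pelaez inversion of Lemma~1. A correct completion would either work from the exact Meijer-$G$ MGF of Lemma~2 raised to the $N$-th power (in which case no Gamma closed form results), or state the approximation openly and use the genuinely matched parameters $k=\mu_Y^2/\sigma_Y^2$ and $\theta=\sigma_Y^2/\mu_Y$.
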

Since $|h_c|$ and $|h_q|$ are independent, the MGF of $T=|h_c|+|h_q|$ can be obtained as follows:
\begin{align}
&M_{T}(s){=}M_{h_{c}}(c_2 s) M_{h_q}(s)\stackrel{(a)}{=} M_{h_c}(c_2 s) M_{Z}(c_1 s),
\end{align}
where step (a) follows from  substituting $M_{h_{q}}(s)$ given in Eq. (\ref{mzs}).
The coverage probability can then be obtained using Gil-Pelaez inversion in \textbf{Lemma~1}.

\subsection{Channel Hardening for Double Nakagami-m Channels}
Let $\kappa$ denote the ratio between the mean $\mu_Y$ and standard deviation $\sigma_Y$ of $|h_c|$:
\small
\begin{align}
    \kappa&\overset{\Delta}{=}\frac{\mu_Y}{\sigma_Y}=\frac{\rho(r,d)N\mathbb{E}[Y]}{\sqrt{\rho(r,d)^{2}N \mathrm{var}\{Y\}}},\notag\\
    &\stackrel{(a)}{=}\frac{\sqrt{N}{[\Gamma(m+1/2)]^2}}{\sqrt{[\Gamma(m+1)]^2\Gamma(m)^2-[\Gamma(m+1/2)]^4}},
    \label{ratio_kaba}
\end{align}
\normalsize
where (a) follows when $m_1=m_2=m$. {The ratio $\kappa$ is a function of $N$ and $m$ and it is an indicator for channel hardening. Equation \eqref{ratio_kaba} implies a “channel hardening”
effect where the channel hardening increases as the number of IRS elements $N$ increases and fading severity $m$ increases. The impact of channel hardening increases with the increasing mean value of $Y$ and decreases with the increase in standard deviation of $Y$.  
}
When $m_1=m_2=m=1$, i.e., Rayleigh channels, we get $\kappa\overset{\Delta}
    =\frac{\sqrt{N}}{\sqrt{0.621}}$. When $m_1=m_2=m=0.5$, i.e.,  Nakagami-m channels with severe fading, the channel hardening decreases 
$\kappa=\frac{\sqrt{N}}{\sqrt{1.4674}}$.
The channel hardening effect can be used to find a condition where IRSs are beneficial to the coverage of typical user, that is, $\zeta l^{-\alpha}< \zeta^2 r^{-\alpha} d^{-\alpha}$. 

\subsection{IRS Coverage Range $D$}\label{D_rang}
To obtain insights related to the maximum coverage range of IRS transmissions, we derive the condition under which the outage probability becomes unity.
\begin{lemma}
The coverage range of IRS can be derived as:
\begin{align}
    d=\bigg[\frac{\sqrt{\frac{ \theta \hat{\sigma}^{2}}{2 P}}r^{\alpha/2}}{\zeta(N \mathbb{E}[Y_n])}\bigg]^{-2/\alpha}.
\end{align}
\end{lemma}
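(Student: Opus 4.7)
The plan is to combine the asymptotic Gaussian approximation for $|h_c|$ from Section~III.A with the channel-hardening property from Section~III.C to identify the critical distance $d$ beyond which the mean IRS contribution no longer meets the coverage threshold. The overall strategy is to treat $|h_c|$ as effectively deterministic in the large-$N$ regime (since its standard deviation is dominated by its mean by a factor $\kappa \propto \sqrt{N}$) and then invert the resulting deterministic power–distance relation.

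First, I would rewrite the coverage event $\{T > \sqrt{\theta \hat{\sigma}^2/P}\}$ using $T = |h_c| + |h_q|$ and invoke the large-$N$ approximation $|h_c| \approx \mu_{h_c} = \rho(r,d)\, N\, \mathbb{E}[Y_n]$ established in Eq.~(\ref{h_c}). The hardening ratio in Eq.~(\ref{ratio_kaba}) guarantees that the fluctuations of $|h_c|$ about this mean vanish (relative to the mean) as $N \to \infty$, so the outage probability is governed entirely by whether the mean IRS amplitude clears the noise-to-threshold gap. In the regime where one wishes to quantify purely the IRS's contribution, one sets the required amplitude budget equal to $\sqrt{\theta \hat{\sigma}^2/(2P)}$ — the factor of $1/2$ reflects an equal-split boundary condition between the direct and IRS-reflected paths at the edge of coverage, i.e.\ each path must contribute at least $\sqrt{\theta \hat{\sigma}^2/(2P)}$ for the combined signal to remain above threshold.

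Second, I would substitute $\rho(r,d) = \zeta r^{-\alpha/2} d^{-\alpha/2}$ into the equality
\begin{equation*}
\zeta\, r^{-\alpha/2}\, d^{-\alpha/2}\, N\, \mathbb{E}[Y_n] \;=\; \sqrt{\tfrac{\theta \hat{\sigma}^2}{2P}},
\end{equation*}
and solve for $d$ by algebraic inversion, yielding
\begin{equation*}
d \;=\; \Bigg[\frac{\sqrt{\tfrac{\theta \hat{\sigma}^2}{2P}}\, r^{\alpha/2}}{\zeta\, (N\, \mathbb{E}[Y_n])}\Bigg]^{-2/\alpha},
\end{equation*}
which is exactly the claimed expression. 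The moment $\mathbb{E}[Y_n]$ is already available in closed form from the double-Nakagami moment identity given earlier in Section~III.A.

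The main obstacle is not the algebra but the \emph{justification of the threshold calibration} $\sqrt{\theta \hat{\sigma}^2/(2P)}$ used to define the onset of unit outage. One must argue carefully why this corresponds to the correct boundary of the IRS's useful coverage — whether via an equal-power-split interpretation between the direct and reflected links, or via an asymptotic argument that the Gaussian tail of $T$ saturates at the mean minus a vanishingly small fluctuation. Once this calibration is fixed, the remainder is a routine inversion of the path-loss relation under the hardened-channel approximation.
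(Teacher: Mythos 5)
Your derivation reaches the stated formula, but by a genuinely different route than the paper, and your suspicion about the threshold calibration is exactly where the two diverge. The paper does \emph{not} invoke the direct link at all in this proof: it defines the outage event purely on the IRS link, $\mathcal{O}=\mathbb{P}[P|h_c|^{2}\hat{\sigma}^{-2}<\theta]$, writes this as the Gaussian CDF of $|h_c|$ in terms of two $\erf$ functions with mean $\mu=\rho(r,d)N\mathbb{E}[Y_n]$ and standard deviation $\sigma=\rho(r,d)\sqrt{N\,\mathrm{var}\{Y_n\}}$, and then asks when $\mathcal{O}=1$. It argues (heuristically) that the resulting $\erf$ identity holds when $\sqrt{\theta\hat{\sigma}^{2}/P}=4\mu$ together with $\mu/(\sigma\sqrt{2})\geq 0.5$, and solves the first equation for $d$. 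So the paper retains the Gaussian fluctuation and calibrates via the $\erf$ arguments, whereas you collapse $|h_c|$ to its mean via channel hardening and calibrate via an equal-split argument between the direct and reflected paths; the latter interpretation has no counterpart in the paper's proof, since $h_q$ never enters it. Your approach buys a cleaner deterministic inversion; the paper's buys an explicit (if rough) accounting of when the Gaussian tail actually saturates.

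One concrete point you should be aware of: the paper's own calibration is internally inconsistent with the lemma as stated. Solving $\sqrt{\theta\hat{\sigma}^{2}/P}=4\mu$ gives $\mu=\sqrt{\theta\hat{\sigma}^{2}/(16P)}$, which would put $16P$, not $2P$, under the square root in the final expression; your calibration $\mu=\sqrt{\theta\hat{\sigma}^{2}/(2P)}$ is the one that actually reproduces the displayed formula. Neither calibration is rigorously justified — the condition for the $\erf$ sum to equal $2$ requires both arguments to be large, which is a one-sided asymptotic statement rather than an equality — so the "main obstacle" you identify is real and is not resolved by the paper either. If you adopt your route, you should state explicitly that the factor $1/2$ is a design choice defining the coverage boundary, not a consequence of the outage condition $\mathcal{O}=1$ used in the paper.
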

\begin{proof}
The outage probability is defined as  $\mathbb{P}[{\rm SNR}<\theta]$, where $\theta$ denotes the predefined threshold for correct signal reception and is characterized as follows:
\begin{align}
    \mathcal{O}&=\mathbb{P}[{\rm SNR}<\theta]
    =\mathbb{P}[P|h_c|^{2}\hat{\sigma}^{-2}<\theta]=F_{h_c}\left(\pm\sqrt{\frac{\theta \hat{\sigma}^{2}}{P}}\right),\notag
\end{align}
where $F_{h_c}(h)$ is the  CDF of  $|h_c|$. Since $|h_c|$ is a random variable with Gaussian distribution and its PDF is given in Eq. (\ref{h_c}), the outage can be given as  follows:
\begin{align}\label{h_c_1}
   \mathcal{O}&= 0.5\left(\erf\left(\frac{\sqrt{\frac{\theta \hat{\sigma}^{2}}{P}}-\mu}{\sigma\sqrt{2}}\right) -\erf\left(\frac{-\sqrt{\frac{\theta \hat{\sigma}^{2}}{P}}-\mu}{\sigma\sqrt{2}}\right)\right), 
\end{align}
where $\mu=\rho(r,d)N \mathbb{E}[Y_n]$, $\sigma=\sqrt{\rho(r,d)^{2}N\mathrm{var}\{Y_n\}}$, and  $\rho(r,d)=\zeta r^{-\alpha/2}d^{-\alpha/2}$. To determine the coverage range of the IRS, we calculate $d$ that satisfies  $\mathcal{O}= F_{h_c}(h)=1$ as:
\begin{align}\label{app0}
  \erf\left(\frac{\sqrt{\frac{\theta \hat{\sigma}^{2}}{P}}-\mu}{\sigma\sqrt{2}}\right) +\erf\left(\frac{\sqrt{\frac{\theta \hat{\sigma}^{2}}{P}}+\mu}{\sigma\sqrt{2}}\right)=2.
\end{align}
The aforementioned equation holds closely when
\begin{equation}\label{app1}
    \sqrt{\frac{\theta \hat{\sigma}^{2}}{P}} =4 \mu \quad \text{and} \quad \frac{\mu}{\sigma \sqrt{2}}=\frac{\sqrt{N}\mathbb{E}[Y_n]}{\sqrt{\mathrm{var}[Y_n]}} \geq 0.5.
\end{equation}
Finally, solving \eqref{app1} results in the coverage range as given in \textbf{Lemma~3}.
\end{proof}

\section{Numerical Results and Discussions}\label{results_mm}
In this section, we first  present the simulation parameters. Then, we validate our numerical results using Monte-Carlo simulations and use the developed analytical models to obtain insights related to the coverage probability of the typical user as a function of the number of IRS elements, SNR threshold, fading severity, and distance between the UE and the IRS.

\subsubsection{Simulation Parameters}\label{parameter_sec}
Unless otherwise stated, we use the following simulation parameters throughout our numerical results. The transmission power of the BS  is $P=2.5$ Watts. The distance $l=500$ meters when {\color{black}$\psi=85^\circ$}, the distance $r=500$ meters, and the distance $d=100$ meters. 
The path loss exponent for the BS is set to $\alpha=4$. The network downlink bandwidth is $W=100$ MHz allocated for the BS. The receiver noise is calculated as \cite{ibrahim2019meta,iibrahim2019meta}   $\hat{\sigma}^{2}=-174\text{ dBm/Hz}+10\log_{10}(W)+10\text{ dB}$. 

\subsubsection{Coverage as a function of SNR threshold}\label{validation}
 The proposed analytical model is validated using Monte Carlo simulations implemented in MATLAB. We performed  simulations over $10,000$ network configurations with different Nakagami-m fading severity parameters and IRS equipped with asymptotically large number of IRS elements. As can be seen in Fig.~\ref{coverage_prob_ver}, the analytical results of the derived coverage probability under different Nakagami fading parameters $m=0.5, 1, 2$ match the simulation results. This confirms the accuracy of the analytical expressions derived above for our model. As observed from Fig. \ref{coverage_prob_ver}, as $m$ increases, the coverage probability increases due to decreasing fading severity (increasing m). When $m=1$, this represents the coverage probability of Rayleigh fading. 
 \begin{figure}[!h]
    \begin{center}
    \scalebox{0.55}[0.55]{\includegraphics{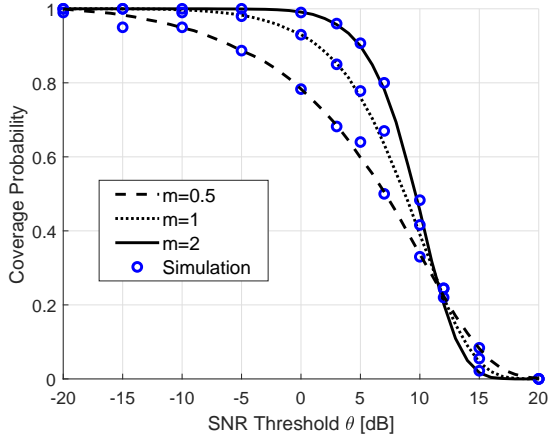}}
    \end{center}
    \caption{\small Coverage probability as a function of SNR threshold $\theta$ for $N=500$ reflecting antenna elements, $\alpha=4$, $\theta=5$ dB, $d=100$ m, $r=500$ m, $P=2.5$ W.}
     \label{coverage_prob_ver}
    \end{figure}
\subsubsection{Coverage as a function of number of IRS Elements}\label{cov_disc}
Fig. \ref{coverage_m} illustrates the coverage probability $\mathcal{C}_{S}$ of the typical user as a function of finite number of IRS elements and distance from the IRS.  We can see that by increasing the distance from the IRS and lower number of IRS elements, the coverage of the typical user significantly decreases. The IRS link is beneficial for small values of distance $d$ between the user and the IRS. In particular, for large values of $d$, the coverage from IRS link is nearly zero and the achieved 60\% coverage can be observed only due to the direct link which is independent of the distance between the UE and the IRS. 

{\color{black} Most of the existing  research works considered asymptotically large numbers of IRS elements and relied on CLT, which may not yield accurate results. The reason is that an IRS surface is typically of a finite size and thereby possesses a finite number of IRS elements in practice. Fig.~3  offers insights related to the impact of a finite number of IRS elements on the coverage probability. This numerical result quantifies the performance gap considering both finite and infinite number of IRS elements (as illustrated by the difference in the coverage probability when $N=10$ and when $N=500$) and signifies the importance of exact coverage analysis. }
    \begin{figure}[!h]
    \begin{center}
    \scalebox{0.55}[0.55]{\includegraphics{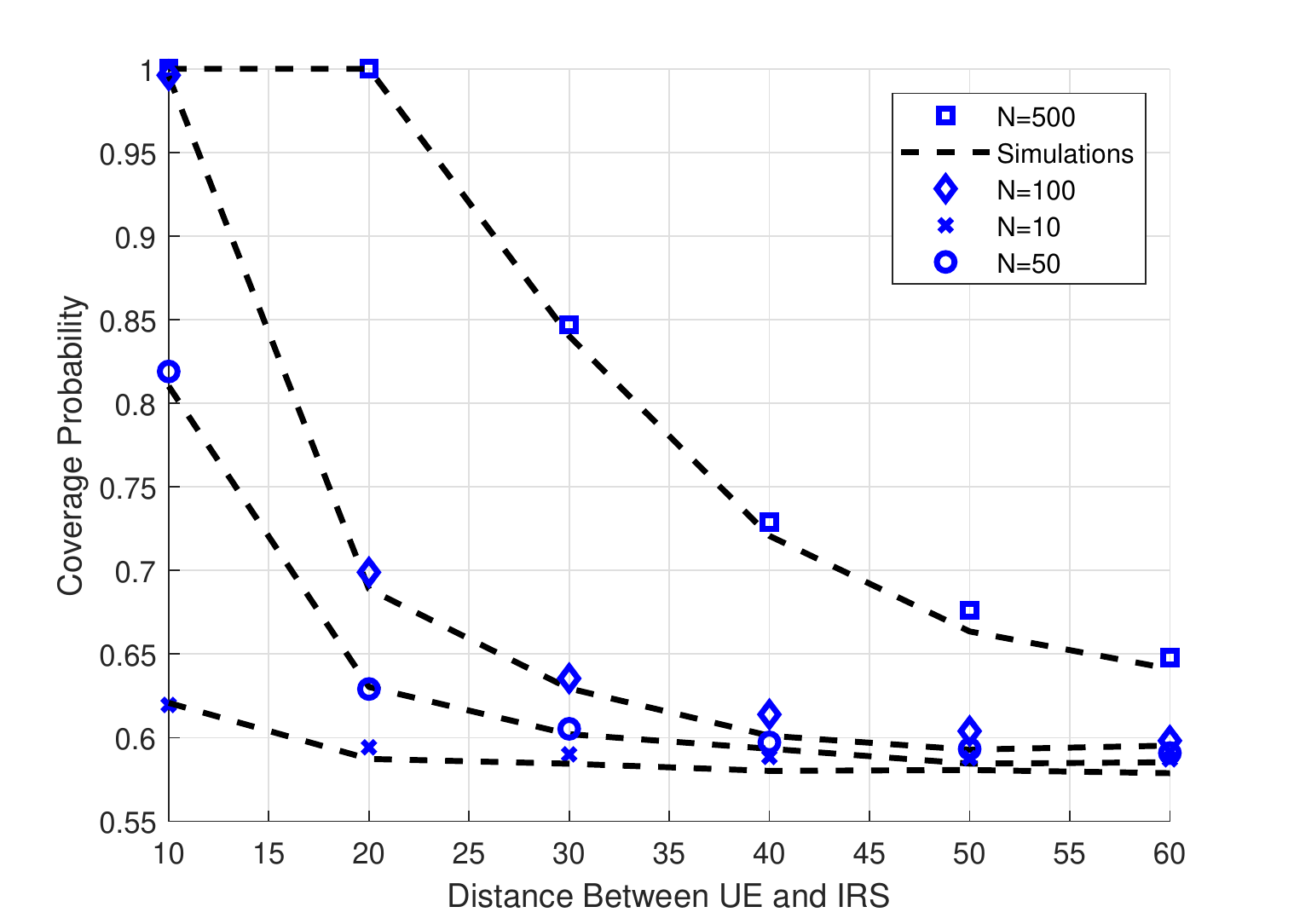}}
    \end{center}
    \caption{\small Coverage probability as a function of the IRS elements $N$ when $m=0.5$, $\alpha=4$, $\theta=5$dB, $r=500$ m, $p=2.5$ W (The blue data points are obtained from analysis. The dotted black lines are obtained from simulations).}
     \label{coverage_m}
    \end{figure}
    \subsubsection{Coverage probability as a function of the user's distance from the IRS}
Fig. \ref{1m} illustrates the coverage probability as a function of the user's distance from the IRS and offers insights related to the benefits of IRS-aided transmissions over the direct link. {\color{black}In particular, the graph provides a comparative analysis of three modes of operation, 1) direct transmission, 2) IRS-aided transmission, and 3) IRS-aided transmission with direct link.  We highlight points A and B in Fig. \ref{1m} where operations modes can be switched in order to enhance coverage probability.  Furthermore, we note that the IRS link is beneficial for small values of distance $d$ between the user and the IRS. The graph in Fig. \ref{1m} indicates the maximum coverage distance of the IRS.} That is, in the absence of a direct link, a user can choose the IRS if its distance is less than 30 m from the IRS equipped with 500 elements. If the direct link is available, it is beneficial to combine both signals up to a distance of 70 m. On the other hand, if the IRS is equipped with 100 elements, the user can choose the IRS if its distance is less than 20 m. If the direct link is available, it is beneficial to combine both signals up to a distance of 40~m. 
\textcolor{black}{Our closed-form result given in Lemma~3 and numerical result in Fig.~4 provide visualization of the maximum coverage range of a given IRS and thus can be a useful tool for interference evaluation in large-scale networks.  }
      \begin{figure}[!h]
    \begin{center}
    \scalebox{0.55}[0.55]{\includegraphics{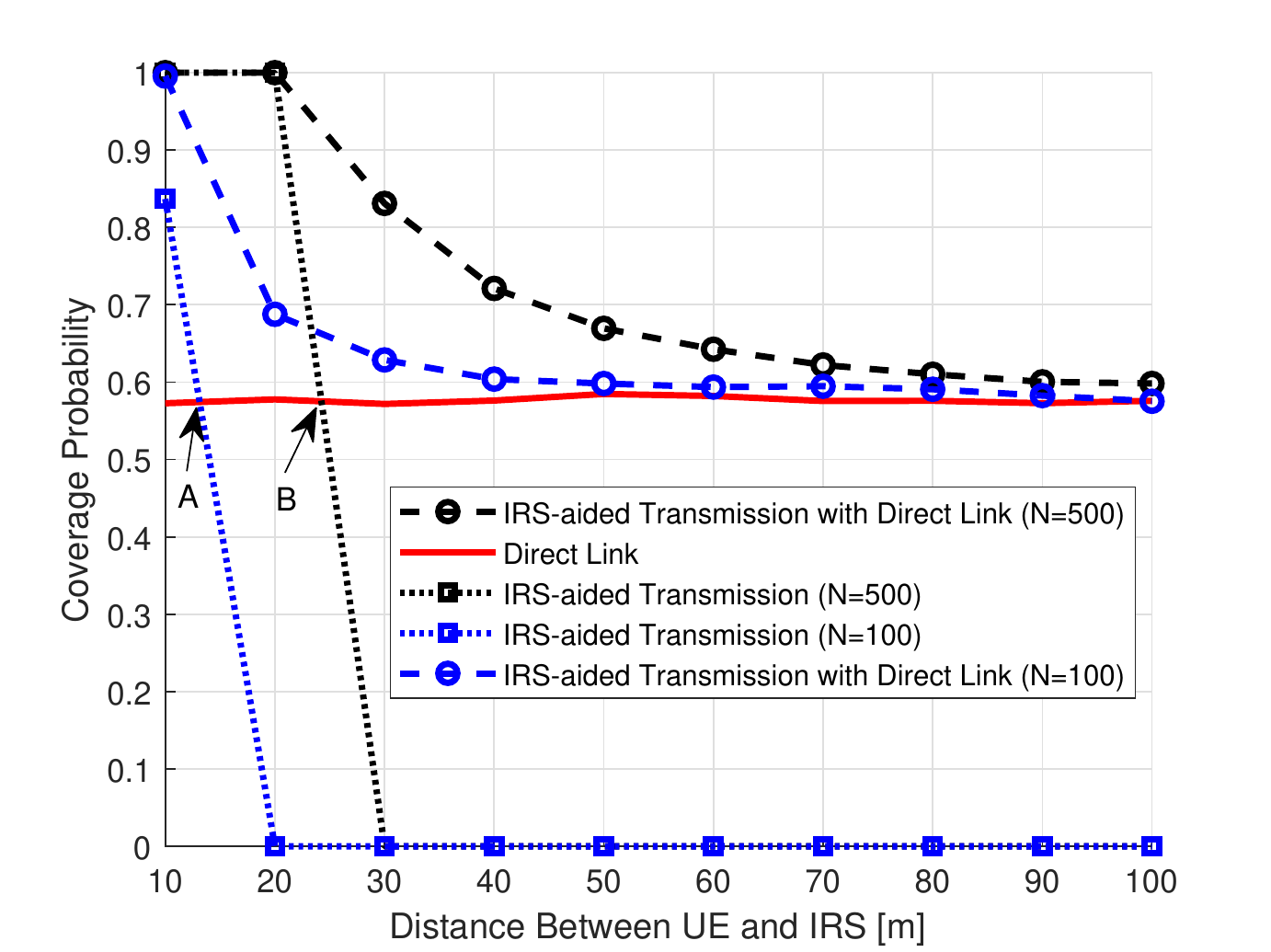}}
    \end{center}
    \caption{\small Comparison of the coverage probability as a function of the user's distance from the IRS, $m=0.5$.}
     \label{1m}
    \end{figure}
\subsubsection{Channel hardening}
In order to show the effect of channel hardening, in Fig. \ref{coverage_N_2}, we plot $\kappa$ as a function of the number of reflecting antenna elements $N$. As we can see in Fig. \ref{coverage_N_2}, given a value $N$, as $m$ increases (fading  decreases), the channel hardening factor $\kappa$  also increases. Channel hardening correlates with
a nearly deterministic channel with improved reliability, and requires less frequent channel estimation.
     \begin{figure}[!h]
    \begin{center}
    \scalebox{0.55}[0.55]{\includegraphics{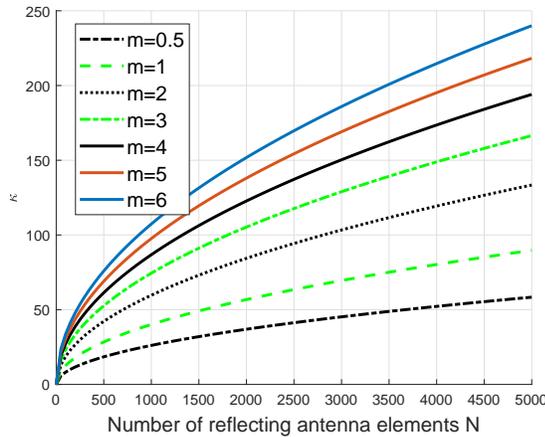}}
    \end{center}
    \caption{\small $\kappa$ as a function of number of reflecting antenna elements $N$ for different values of $m$.}
     \label{coverage_N_2}
    \end{figure}

\section{Conclusion}\label{conc_future_mm}
We characterize the coverage probability of IRS-aided communication networks with  Nakagami-m channels, taking into account the direct link between the BS and UE in our model. The results reveal that the number of intelligent reflecting surfaces $N$ has a significant impact on the system performance, and that the use of IRSs enhances coverage for edge users. {\color{black}
Our results show that the assumption of asymptotically large  numbers of IRS in most of existing work can overestimate the users’ coverage probability compared  to  the  exact  users’  coverage  probability. We provide a comparative analysis of three important IRS system modes of  operation, 1) direct transmission, 2) IRS-aided transmission, and 3) IRS-aided transmission with direct link. We  highlight  points  where modes have to be switched to achieve better coverage probability.} Moreover, increasing the number of intelligent reflecting surfaces $N$ and reducing the fading severity (i.e., increasing the fading severity parameter) enhances the channel hardening in IRS-aided communication networks. Finally, we propose a closed-form expression  to  characterize  the  IRS coverage range for any network parameters. 

\bibliographystyle{IEEEtran}
\bibliography{IoT-massive}

\begin{thebibliography}{10}
\providecommand{\url}[1]{#1}
\csname url@samestyle\endcsname
\providecommand{\newblock}{\relax}
\providecommand{\bibinfo}[2]{#2}
\providecommand{\BIBentrySTDinterwordspacing}{\spaceskip=0pt\relax}
\providecommand{\BIBentryALTinterwordstretchfactor}{4}
\providecommand{\BIBentryALTinterwordspacing}{\spaceskip=\fontdimen2\font plus
\BIBentryALTinterwordstretchfactor\fontdimen3\font minus
  \fontdimen4\font\relax}
\providecommand{\BIBforeignlanguage}[2]{{%
\expandafter\ifx\csname l@#1\endcsname\relax
\typeout{** WARNING: IEEEtran.bst: No hyphenation pattern has been}%
\typeout{** loaded for the language `#1'. Using the pattern for}%
\typeout{** the default language instead.}%
\else
\language=\csname l@#1\endcsname
\fi
#2}}
\providecommand{\BIBdecl}{\relax}
\BIBdecl

\bibitem{di2019smart}
M.~Di~Renzo, M.~Debbah, D.-T. Phan-Huy, A.~Zappone, M.-S. Alouini, C.~Yuen,
  V.~Sciancalepore, G.~C. Alexandropoulos, J.~Hoydis, H.~Gacanin \emph{et~al.},
  ``Smart radio environments empowered by reconfigurable {AI} meta-surfaces: An
  idea whose time has come,'' \emph{EURASIP Journal on Wireless Communications
  and Networking}, vol. 2019, no.~1, pp. 1--20, 2019.

\bibitem{basar2019wireless}
E.~Basar, M.~Di~Renzo, J.~De~Rosny, M.~Debbah, M.-S. Alouini, and R.~Zhang,
  ``Wireless communications through reconfigurable intelligent surfaces,''
  \emph{IEEE Access}, vol.~7, pp. 116\,753--116\,773, 2019.

\bibitem{liaskos2018new}
C.~Liaskos, S.~Nie, A.~Tsioliaridou, A.~Pitsillides, S.~Ioannidis, and
  I.~Akyildiz, ``A new wireless communication paradigm through
  software-controlled metasurfaces,'' \emph{IEEE Communications Magazine},
  vol.~56, no.~9, pp. 162--169, 2018.

\bibitem{yang2020coverage}
L.~Yang, Y.~Yang, M.~O. Hasna, and M.-S. Alouini, ``Coverage, probability of
  {SNR} gain, and {DOR} analysis of {RIS}-aided communication systems,''
  \emph{IEEE Wireless Communications Letters}, 2020.

\bibitem{zhang2020downlink}
C.~Zhang, W.~Yi, Y.~Liu, Z.~Qin, and K.~K. Chai, ``Downlink analysis for
  reconfigurable intelligent surfaces aided {NOMA} networks,'' \emph{arXiv
  preprint arXiv:2006.13260}, 2020.

\bibitem{yue2020performance}
X.~Yue and Y.~Liu, ``Performance analysis of intelligent reflecting surface
  assisted {NOMA} networks,'' \emph{arXiv preprint arXiv:2002.09907}, 2020.

\bibitem{Ref1}
M.~H. Samuh and A.~M. Salhab, ``Performance analysis of reconfigurable
  intelligent surfaces over {N}akagami-m fading channels,'' \emph{arXiv
  preprint arXiv:2010.07841}, 2020.

\bibitem{Ref2trigui2020comprehensive}
I.~Trigui, W.~Ajib, and W.-P. Zhu, ``A comprehensive study of reconfigurable
  intelligent surfaces in generalized fading,'' \emph{arXiv preprint
  arXiv:2004.02922}, 2020.

\bibitem{Ref3ferreira2020bit}
R.~C. Ferreira, M.~S. Facina, F.~A. De~Figueiredo, G.~Fraidenraich, and E.~R.
  De~Lima, ``Bit error probability for large intelligent surfaces under
  double-nakagami fading channels,'' \emph{IEEE Open Journal of the
  Communications Society}, vol.~1, pp. 750--759, 2020.

\bibitem{Ref4sharmay2020intelligent}
P.~K. Sharmay and P.~Garg, ``Intelligent reflecting surfaces to achieve the
  full-duplex wireless communication,'' \emph{IEEE Communications Letters},
  2020.

\bibitem{lyu2020spatial}
J.~Lyu and R.~Zhang, ``Spatial throughput characterization for intelligent
  reflecting surface aided multiuser system,'' \emph{IEEE Wireless
  Communications Letters}, 2020.

\bibitem{wu2019towards}
Q.~Wu and R.~Zhang, ``Towards smart and reconfigurable environment:
  {I}ntelligent reflecting surface aided wireless network,'' \emph{IEEE
  Communications Magazine}, vol.~58, no.~1, pp. 106--112, 2019.

\bibitem{shafique2020optimization}
T.~Shafique, H.~Tabassum, and E.~Hossain, ``Optimization of wireless relaying
  with flexible {UAV}-borne reflecting surfaces,'' \emph{IEEE Transactions on
  Communications}, 2020.

\bibitem{wu2019intelligent}
Q.~Wu and R.~Zhang, ``Intelligent reflecting surface enhanced wireless network
  via joint active and passive beamforming,'' \emph{IEEE Transactions on
  Wireless Communications}, vol.~18, no.~11, pp. 5394--5409, 2019.

\bibitem{karagiannidis2005n}
G.~K. Karagiannidis, N.~C. Sagias, and T.~Mathiopoulos, ``The {N}* nakagami
  fading channel model,'' in \emph{2005 2nd International Symposium on Wireless
  Communication Systems}.\hskip 1em plus 0.5em minus 0.4em\relax IEEE, 2005,
  pp. 185--189.

\bibitem{gil1951note}
J.~Gil-Pelaez, ``Note on the inversion theorem,'' \emph{Biometrika}, vol.~38,
  no. 3-4, pp. 481--482, 1951.

\bibitem{jeffrey2007table}
A.~Jeffrey and D.~Zwillinger, \emph{Table of integrals, series, and
  products}.\hskip 1em plus 0.5em minus 0.4em\relax Elsevier, 2007.

\bibitem{karagiannidis2006closed}
G.~K. Karagiannidis, N.~C. Sagias, and T.~A. Tsiftsis, ``Closed-form statistics
  for the sum of squared nakagami-m variates and its applications,'' \emph{IEEE
  Transactions on Communications}, vol.~54, no.~8, pp. 1353--1359, 2006.

\bibitem{ibrahim2019meta}
H.~Ibrahim, H.~Tabassum, and U.~T. Nguyen, ``The meta distributions of the
  {SIR}/{SNR} and data rate in coexisting sub-6{GH}z and millimeter-wave
  cellular networks,'' \emph{IEEE Open Journal of the Communications Society},
  vol.~1, pp. 1213--1229, 2020.

\bibitem{iibrahim2019meta}
------, ``Meta distribution of {SIR} in dual-hop {I}nternet-of-{T}hings
  ({I}o{T}) networks,'' in \emph{ICC 2019-2019 IEEE International Conference on
  Communications (ICC)}.\hskip 1em plus 0.5em minus 0.4em\relax IEEE, 2019, pp.
  1--7.

\end{thebibliography}
%
%

\end{document}